\newtheorem{theorem}{Theorem}[section]
\newtheorem{lemma}[theorem]{Lemma}
\newtheorem{corollary}[theorem]{Corollary}
\theoremstyle{definition}
\newtheorem{remark}[theorem]{Remark}
\newcommand{\ex}{\mathbb{E}}
\newcommand{\F}{\bar{F}}
\renewcommand{\L}{\Lambda}
\newcommand{\m}{\mathrm{m}}
\newcommand{\h}{\mathrm{h}}
\newcommand{\e}{\mathrm{\ell}}
\newcommand{\g}{\mathrm{g}}
\renewcommand{\d}{\mathrm{d}}
\newcommand{\du}{\mathop{\d u}}
\renewcommand{\F}{\bar{F}}
\renewcommand{\(}{\left(}
\renewcommand{\)}{\right)}
\providecommand{\keywords}[1]
{\small\textbf{Keywords: } #1}
\providecommand{\class}[1]
{\small\textbf{JEL Classification:} #1}
\begin{document}
\title{On the Equilibrium Uniqueness in Cournot Competition with Demand Uncertainty}
\author{Stefanos Leonardos$^{1}$, Costis Melolidakis$^{2}$  \\
\small $^{1}$Singapore University of Technology and Design, 8 Somapah Rd, 487372 Singapore\\
\small $^{2}$National and Kapodistrian University of Athens, Panepistimioupolis, 15784 Athens, Greece \\
}
\date{}

\maketitle
\begin{abstract}
We revisit the linear Cournot model with uncertain demand that is studied in \cite{Lag06} and provide sufficient conditions for equilibrium uniqueness that complement the existing results. We show that if the distribution of the demand intercept has the decreasing mean residual demand (DMRD) or the increasing generalized failure rate (IGFR) property, then uniqueness of equilibrium is guaranteed. The DMRD condition implies log-concavity of the expected profits per unit of output without additional assumptions on the existence or the shape of the density of the demand intercept and, hence, answers in the affirmative the conjecture of \cite{Lag06} that such conditions may not be necessary.
\end{abstract}
\keywords{Cournot Model, Demand Uncertainty, Unique Equilibrium, Demand Distributions}\\
\class{C7}

\section{Introduction}
\cite{Lag06} considers a model of Cournot competition with linear demand -- up to the non-negativity constraint for the market price -- in which the demand intercept is stochastic. Demand uncertainty can make the expected demand sufficiently convex which in turn may lead to a multiplicity of equilibria, \citep{Vi01,La07}. Motivated by this, \cite{Lag06} studies conditions on the distribution of the stochastic demand that guarantee uniqueness of the equilibrium. His main result is that to establish uniqueness, it is sufficient to assume that the distribution has monotone or first decreasing and then increasing (equivalently \emph{bathtub-shaped}, see also \Cref{sub:probabilistic}) hazard rate and that the expected profit per unit of resource is log-concave as the total output approaches zero \citep[Proposition 1]{Lag06}. The latter is achieved by restricting the value at zero of the density of the stochastic demand intercept.\par
In this note, we extend these conditions. We utilize unimodality conditions that we derived in the more general setting of pricing problems with stochastic linear demand \citep{Le18}, and show that equilibrium uniqueness is achieved if the distribution of the demand intercept has the decreasing mean residual demand (DMRD) property or the increasing generalized failure rate (IGFR) property. This is established in \Cref{thm:sufficient}.\footnote{For other related results, see \cite{Leon20,Leo21,Leon21}, preliminary versions of which appear in \cite{Bel18,Kok18}.} The DMRD and IGFR conditions are not comparable, i.e., neither implies the other. Concerning their relationship to the conditions from \cite{Lag06}, the increasing hazard rate property does imply both DMRD and IGFR. However, this is not necessarily true for the decreasing hazard rate nor for the bathtub-shaped property \citep{Gu95}. Hence, the sufficient DMRD and IGFR conditions generalize rather than substitute the existing ones. Given the inclusiveness of the IGFR class of distributions \citep{Pa05,Ba13}, the present conditions cannot be significantly extended. In case that the Cournot oligopolists face zero marginal cost, the uniqueness result can be directly obtained for the class of distributions with the \emph{decreasing generalized mean residual demand} (DGMRD) property and finite $n+1$-th moment, where $n$ is the number of competitors. This is the statement of \Cref{cor:dgmrl}. The DGMRD class includes as a proper subset the IGFR class \citep{Be98,Le18}. Finally, as shown in \Cref{thm:filler}, the DMRD condition implies log-concavity of the expected profit per unit of output without requiring the existence of a density nor any restrictions on its value close to zero (if such a density exists). This answers in the affirmative the conjecture of \cite{Lag06} that the imposed restriction on the value at zero of the density of the stochastic demand may not be necessary. 

\subsection{Outline}
The rest of the paper is structured as follows. In \Cref{sub:model} we restate the model of \cite{Lag06} and in \Cref{sub:probabilistic}, we define all related probabilistic notions and shortly discuss the relationship between the IFR, IGFR, DMRD and DGMRD classes. To make the exposition self-contained, we also provide the relevant results from \cite{Le18}. Our main contribution is presented \Cref{sec:results}. 
\section{Model and Definitions}\label{sec:definitions}
\subsection{Cournot Model}\label{sub:model}
Consider a Cournot market with $n\ge 1$ identical competing firms $i\in\{1,2,\dots,n\}$ that produce a homogeneous good. Each firm has the same constant marginal cost, denoted by $c\ge0$. Firm $i$'s production quantity is denoted by $x_i$ and the total industry output is denoted by $x:=\sum_{i=1}^n x_i$. 
Also, using standard notation, let $x_{-i}:=\(x_1,x_2,\dots,x_{i-1},x_{i+1},\dots,x_n\)$ for any $i=1,2,\dots,n$. The firms face a linear\footnote{Up to the non-negativity constraint.} inverse demand function 
\begin{equation}\label{eq:demand}p\(x\)=\(\alpha-x\)_+=\begin{cases}\alpha-x, &\text{if } x\le \alpha\\0, & \text{otherwise}\end{cases} \end{equation}
where $p\(x\)$ denotes the price for output level $x\ge0$. In \cite{Lag06}, the inverse demand function also includes the slope parameter $b>0$, i.e., it has the form $p\(x\)=\(\alpha-bx\)_+$, with $b>0$. However, the parameter $b$ appears in the analysis -- and in particular, in the first order conditions that determine the Nash equilibrium, cf. equation (1) in \cite{Lag06} -- only as multiplied by $x$. Thus, without loss of generality, we assume throughout this paper that $b=1$ (e.g., by rescaling equation \eqref{eq:demand} or equivalently, by suitably choosing the units of measurement of the output $x$. See also \Cref{rem:b}).\par
The demand intercept or demand level $\alpha$ is a non-negative random variable that takes values in $[0,H]$ for some real number $H>0$ or in $[0,H)$, if $H=+\infty$. We assume that $\alpha$ has an absolutely continuous distribution function $F$, density $f=F'$, and finite expectation $\ex\alpha<+\infty$, that also satisfies $\ex\alpha>c$. We will write $\F\(x\):=1-F\(x\)$ to denote the tail of the distribution of $F$. Firms are risk neutral and maximize their expected payoffs 
\begin{equation}\label{eq:payoff}\pi_i\(x_i ; x_{-i}\):=x_i \(\ex \(\alpha-x\)_+-c\)\end{equation}
with respect to their own output $x_i$, for $i=1,2,\dots,n$. A vector of outputs $\(x_i^*,x_{-i}^*\)$ is a pure Nash equilibrium if $\pi_i\(x_i^*,x^*_{-i}\)\ge \pi_i\(x_i,x^*_{-i}\)$, for all $x_i\ge0$. A pure Nash equilibrium is called symmetric if $x_i^*\equiv x^*$ for some $x^*>0$, for all $i=1,2,\dots,n$. In this case, we will slightly abuse notation and denote the Nash equilibrium with $x^*$. Any symmetric Nash equilibrium satisfies the first order condition 
\begin{equation}\label{eq:condition}\int_{nx^*}^{H}\alpha f\(\alpha\)\mathrm{d}\alpha-c=\(n+1\)x^*\F\(nx^*\),\end{equation}
cf. \cite{Lag06}, Lemma 1. Our scope is to determine conditions on the distribution $F$ of $\alpha$, so that \eqref{eq:condition} has a unique solution, or equivalently that the game has a unique Nash equilibrium in pure strategies. \cite{Lag06} derives such conditions which are reported in \Cref{thm:lagerlof}.

\subsection{Probabilistic Notions}\label{sub:probabilistic}
Let $\h\(x\):=f\(x\)/\F\(x\)$, for $x<H$ denote the \emph{hazard} or \emph{failure rate} of $\alpha$ and $\g\(x\):=x\h\(x\)$, the \emph{generalized failure rate} of $\alpha$ \citep{Be98,La99,Be07}. We will say that $\h$ is \emph{bathtub-shaped} or simply \emph{B-shaped} if there exists an $x_0>0$ (not necessarily unique) such that $\h\(x\)$ is non-increasing for $x<x_0$ and $\h\(x\)$ is non-decreasing for $x>x_0$ \citep{Gu95,Na09}. Also, we will say that $F$ has the \emph{increasing failure rate} (IFR) property, if $\h\(x\)$ is non-decreasing for $x<H$. Similarly, we will say that $F$ has the \emph{increasing generalized failure rate} property or simply that $F$ is IGFR, if $\g\(x\)$ is non-decreasing for $x<H$. Finally, let 
\begin{equation}\label{mrl}\m\(x\):=\begin{cases}\ex\(\alpha-x\mid \alpha>x\)=\displaystyle\frac{1}{\F\(x\)}\int_{x}^{H}\F\(u\)\du, & \text{if } x<H\\0, & \text{otherwise}\end{cases}
\end{equation}
denote the \emph{mean residual demand} (MRD) function of $\alpha$ \citep{Sh07,Lax06}, and $\e\(x\):=\m\(x\)/x$, for $0<x<H$, the \emph{generalized mean residual demand (GMRD)} function \citep{Le18}. Since $\alpha$ is non-negative, we have that $\m\(0\)=\ex \alpha$. We will say that $\alpha$ has the \emph{decreasing mean residual demand} property, or simply that $\alpha$ is DMRD, if $\m\(x\)$ is non-increasing for $x<H$ and similarly, that $\alpha$ has the \emph{decreasing generalized mean residual demand} property, or simply that $\alpha$ is DGMRD, if $\e\(x\)$ is non-increasing for $x<H$. To derive the set of sufficient conditions of the present paper, we will make use of the following properties from \cite{Le18}. 
\begin{lemma}\label{lem:technical}
Let $\alpha$ be a non-negative random variable with $\ex \alpha<+\infty$.
\begin{enumerate}[itemsep=0cm, label=(\roman*).]
\item If $\alpha$ is DGMRD with $\displaystyle\lim_{x\to+\infty}\e\(x\)=\gamma$, and $\beta\ge0$ is a constant, then
 $\gamma<1/\beta$, if and only if $\ex \alpha^{\beta+1}<+\infty$. In particular, $\gamma=0$ if and only if $\ex \alpha^{\beta+1}<+\infty$ for every $\beta>0$.
\item If $\alpha$ is IGFR, then $\alpha$ is DGMRD. The converse is not true in general.
\end{enumerate}
\begin{remark}\label{rem:b}
The normalization of the slope parameter $b$ to $1$ in equation \eqref{eq:demand} does not alter the probabilistic properties of interest of the random variable $\alpha$. This follows from the fact that the IFR, DMRD, IGFR and DGMRD classes of random variables are all closed under multiplication with positive constants (in this case with $1/b$), cf. \cite{Pa05}, Propositions 1 and 4, and \cite{Le18}, Corollary 4.2 and Theorem 4.3.
\end{remark}
\end{lemma}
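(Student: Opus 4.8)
The two items are logically independent and I would prove them separately; both rest on writing the survival function through its (generalized) failure rate. I would treat (ii) first, since item (i) is in essence a moment criterion for membership in the DGMRD class, and it is convenient to know the class is well-behaved.

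For (ii), the plan is to turn the additive representation $\m\(x\)=\int_{0}^{H-x}\frac{\F\(x+t\)}{\F\(x\)}\,\d t$ into a multiplicative one via the substitution $t=xs$, which gives $\e\(x\)=\m\(x\)/x=\int_{0}^{H/x-1}\frac{\F\(x(1+s)\)}{\F\(x\)}\,\d s$, with the upper limit read as $+\infty$ when $H=+\infty$. Then I would write $\F\(x(1+s)\)/\F\(x\)=\exp\(-\int_{x}^{x(1+s)}\h\(u\)\,\d u\)$ and substitute $u=xv$ to recast the exponent as $-\int_{1}^{1+s}\frac{\g\(xv\)}{v}\,\d v$. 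Since $\alpha$ is IGFR, for every fixed $v\ge 1$ the map $x\mapsto\g\(xv\)$ is non-decreasing, so for each $s\ge 0$ the ratio $\F\(x(1+s)\)/\F\(x\)$ is non-increasing in $x$; as the $s$-range of integration also shrinks with $x$ and the integrand is non-negative, $\e$ is non-increasing, i.e.\ $\alpha$ is DGMRD. For the last sentence of (ii) I would exhibit — or quote from \cite{Be98,Le18} — a distribution whose $\g$ is non-monotone while $\e$ stays non-increasing; this needs some care, because a careless perturbation of $\g$ typically also destroys monotonicity of $\e$.

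For (i), I would set $G\(x\):=\int_{x}^{H}\F\(u\)\,\d u$, so that $G$ is continuously differentiable with $G'=-\F$, $G\(0\)=\ex\alpha$, and, straight from the definition of the MRD, $G\(x\)=\m\(x\)\F\(x\)=x\,\e\(x\)\F\(x\)$ on $(0,H)$. Dividing the last two identities yields the separable relation $G'\(x\)/G\(x\)=-\bigl(x\,\e\(x\)\bigr)^{-1}$. Since $\alpha$ is DGMRD, $\e$ is non-increasing and non-negative, so $\e\(x\)\downarrow\gamma\in[0,+\infty)$; the case $\beta=0$ is trivial (then $\gamma<1/\beta$ always and $\ex\alpha^{\beta+1}=\ex\alpha<+\infty$), so fix $\beta>0$. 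If $\gamma<1/\beta$, pick $\varepsilon>0$ and $x_0$ with $\e\(x\)\le\gamma+\varepsilon$ for $x\ge x_0$ and $\beta(\gamma+\varepsilon)<1$; integrating the relation from $x_0$ to $x$ gives $G\(x\)\le C\,x^{-1/(\gamma+\varepsilon)}$, hence $x^{\beta}G\(x\)\to 0$, and the integration by parts $\int_{x_0}^{\infty}x^{\beta}\F\(x\)\,\d x=\bigl[-x^{\beta}G\(x\)\bigr]_{x_0}^{\infty}+\beta\int_{x_0}^{\infty}x^{\beta-1}G\(x\)\,\d x$ shows this integral is finite, since $\int_{x_0}^{\infty}x^{\beta-1}G\(x\)\,\d x\le C\int_{x_0}^{\infty}x^{\beta-1-1/(\gamma+\varepsilon)}\,\d x<+\infty$; combined with $\int_0^{x_0}x^{\beta}\F\(x\)\,\d x<+\infty$ this gives $\ex\alpha^{\beta+1}=(\beta+1)\int_0^{\infty}x^{\beta}\F\(x\)\,\d x<+\infty$. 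Conversely, if $\gamma\ge 1/\beta$ then $\e\(x\)\ge\gamma>0$ everywhere forces $G\(x\)\ge C\,x^{-1/\gamma}$ for $x\ge x_0$, so assuming $\ex\alpha^{\beta+1}<+\infty$ (which makes $x^{\beta}G\(x\)\to 0$) the same integration by parts would require $\int_{x_0}^{\infty}x^{\beta-1}G\(x\)\,\d x<+\infty$, contradicting $\int_{x_0}^{\infty}x^{\beta-1}G\(x\)\,\d x\ge C\int_{x_0}^{\infty}x^{\beta-1-1/\gamma}\,\d x=+\infty$. Thus $\gamma<1/\beta\iff\ex\alpha^{\beta+1}<+\infty$; the ``in particular'' clause then follows, because $\gamma=0$ yields $\ex\alpha^{\beta+1}<+\infty$ for all $\beta>0$ by the forward implication, while finiteness of all these moments yields $\gamma<1/\beta$ for all $\beta>0$, i.e.\ $\gamma=0$.

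I expect the only genuine obstacle to be the analytic bookkeeping in (i): checking that the boundary terms in the integration by parts vanish (at $0$ this uses $\beta>0$ and $\ex\alpha<+\infty$; at $+\infty$ it is supplied by the decay bound on $G$ in one direction and by the hypothesised finite moment in the other), and organising everything through $G$ and power-law tail estimates so as never to divide by $\e\(x\)$ when $\gamma=0$. The embedded \Cref{rem:b} is routine: for $Y=\alpha/b$ with $b>0$ one has $\F_Y\(y\)=\F_\alpha\(by\)$, hence $\h_Y\(y\)=b\,\h_\alpha\(by\)$, $\m_Y\(y\)=\m_\alpha\(by\)/b$, $\g_Y\(y\)=\g_\alpha\(by\)$ and $\e_Y\(y\)=\e_\alpha\(by\)$, so monotonicity of $\h$, $\m$, $\g$, $\e$ — i.e.\ the IFR, DMRD, IGFR and DGMRD properties — is inherited.
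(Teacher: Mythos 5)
Your proof cannot be checked against an in-paper argument, because the paper never proves \Cref{lem:technical}: both items are imported verbatim from \cite{Le18} (``we will make use of the following properties from \cite{Le18}''), so the only question is whether your self-contained argument is sound — and it is. For (ii), the representation $\e(x)=\int_0^{H/x-1}\F(x(1+s))/\F(x)\,\mathrm{d}s$ together with $\F(x(1+s))/\F(x)=\exp\big(-\int_1^{1+s}\g(xv)v^{-1}\,\mathrm{d}v\big)$ correctly reduces DGMRD to the monotonicity of $x\mapsto\g(xv)$, which IGFR supplies; this is the standard route (it implicitly uses the hazard-rate representation of $\F$, i.e.\ absolute continuity, but IGFR presupposes a failure rate and the paper assumes a density anyway). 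For (i), working with $G(x)=\int_x^H\F(u)\,\mathrm{d}u$, the identity $(\ln G)'(x)=-1/(x\,\e(x))$ and the resulting two-sided power bounds $G(x)\le Cx^{-1/(\gamma+\varepsilon)}$ (when $\gamma<1/\beta$) and $G(x)\ge Cx^{-1/\gamma}$ (when $\gamma\ge1/\beta$), combined with $\ex\alpha^{\beta+1}=(\beta+1)\int_0^\infty x^{\beta}\F(x)\,\mathrm{d}x$ and the integration by parts, give both directions correctly, and your handling of the boundary terms is right ($x^{\beta}G(x)\le\int_x^\infty u^{\beta}\F(u)\,\mathrm{d}u\to0$ under the finite-moment hypothesis). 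Two small tidying points rather than gaps: the case $H<+\infty$ should be dispatched explicitly (then $\gamma=0$ and all moments are trivially finite, and the limit $x\to+\infty$ in the statement is only interesting for unbounded support), and the ``converse is not true'' clause in (ii) is left to a citation — acceptable, since the paper itself only points to the Birnbaum--Saunders example and mixtures of uniforms in \cite{Le18}, but a one-line concrete counterexample would make your write-up fully self-contained. Your verification of \Cref{rem:b} via $\F_Y(y)=\F_\alpha(by)$, $\h_Y(y)=b\,\h_\alpha(by)$, $\g_Y(y)=\g_\alpha(by)$, $\m_Y(y)=\m_\alpha(by)/b$, $\e_Y(y)=\e_\alpha(by)$ is exactly the closure-under-scaling fact the paper attributes to \cite{Pa05} and \cite{Le18}.
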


\subsubsection{The IFR, IGFR, DMRD and DGMRD families}
Based on the above definitions, the IFR property trivially implies the IGFR property and similarly, the DMRD trivially implies the DGMRD property. The IFR property also implies the DMRD property \citep{Ba05}. However, the DMRD and IGFR properties are not comparable \citep{Le18}. Finally, \Cref{lem:technical}-(ii), which has been independently shown by \cite{Be98}, establishes that the DGMRD class is a proper superset of all these classes. These relationships are illustrated in \Cref{fig:venn}. 

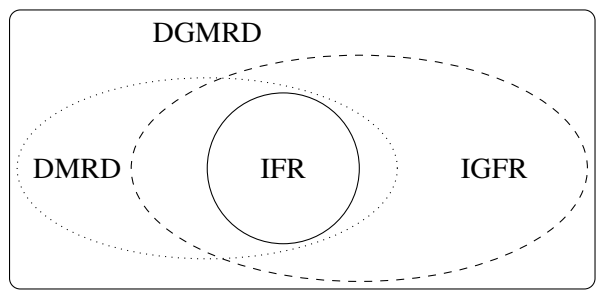
\begin{figure}[!htb]
\center
\begin{tikzpicture}
\draw[rounded corners] (-3.6, -1.6) rectangle (4.1, 2.1) {};
\draw (0,0) ellipse (1cm and 1cm) node {IFR};
\draw[dashed] (1,0) ellipse (3cm and 1.5cm);
\draw (2.2,0) node[right] {IGFR};
\draw[dotted] (-1,0) ellipse (2.5cm and 1.2cm);
\draw (-2,0) node[left] {DMRD};
\draw (-1,1.8) node {DGMRD};
\end{tikzpicture}
\caption{\small Venn diagram illustrating the relationship between the IFR, DMRD, IGFR and DGMRD families. The IFR property (inner circle) implies both the DMRD (dotted ellipse) and the IGFR (dashed ellipse) properties. The DMRD and IGFR classes do not satisfy an inclusion relationship, yet IGFR seems more inclusive in terms of distributions that are interesting for economic applications. DGMRD distributions (outer rectangle) form a proper superset of all these families. Concrete examples of distributions in each family are given in the text.}
\label{fig:venn}
\end{figure}

To get a better understanding of the distributions that are contained in each class, we note that the IFR property is already satisfied by many families of parametric distributions and is, thus, very important for economic applications. As detailed in \cite{Ort19}, the IFR class includes the (lef-truncated) normal distribution, the uniform, the Beta with parameters larger than $1$, the Gamma with shape parameter larger than $1$, the Weibull with exponent larger than $1$, the Log-normal on $\(0, 1\)$, the Logistic, the Laplace, the truncated logistic, the Gumbel (both max and min), the Power function with parameter larger than 1, the chi-squared with degrees of freedom more than 2, the chi with degrees of freedom more than $1$, and the L{\'e}vy model on $\(0, 2/3\)$. \par
The IGFR and DMRD classes properly generalize the IFR class, since they also include distributions with decreasing failure rates (DFR). Such examples are the Beta distribution for values of its parameters less than $1$, the Gamma distribution for any positive value of its shape parameter and the Weibull distribution for any positive values of its exponent \citep{La99}. Further examples of IGFR but not IFR distributions are the Power function on $[0,1]$ with parameter less than $1$, the Lognormal on $\(1,+\infty\)$, the Student's $t$ for certain values of its parameters and the $F$ distribution \citep{Ba05,Ba13}. \par
An important distribution which is IGFR -- and hence DGMRD -- but not DMRD is the heavy-tailed Pareto distribution \citep{La06,Le18}. In the other direction, i.e., to find examples that are DMRD but not IGFR, it suffices to look at distributions that are supported over disjoint intervals. Such distributions can be easily constructed to be DMRD (e.g., proper mixtures of uniform distributions) but which -- inevitably -- are not IGFR, since IGFR distributions need to be defined over continuous intervals \citep{La06}. It is not clear whether DMRD implies IGFR, if we restrict attention to distributions over continuous intervals. Through the same construction, one obtains distributions that are DGMRD but not IGFR. A concrete example of a distribution that is defined over the whole positive half line and which is DGMRD but not IGFR is the Birnbaum-Saunders distribution (which is extensively used in reliability applications) for certain values of its parameters \citep{Le18}. \par
Finally, we note that for pricing and economic applications, the IGFR class seems to exhibit the best trade-off between generality (inclusiveness of distributions) and tractability (handy characterizations), see e.g., \cite{La99,La06} and \cite{Col12} among others. The DGMRD property, although more general, is technically more involved, yet it naturally arises in applications with pricing under demand uncertainty \citep{Leon21}.

\section{Sufficient Conditions}\label{sec:results}
Using the notation and terminology of the previous section, the main results of \cite{Lag06} for the game defined in \Cref{sub:model} can be restated as follows.
\begin{theorem}\normalfont{\citep{Lag06}\textbf{.}}\label{thm:lagerlof}
There exists at least one symmetric Nash equilibrium and no asymmetric equilibria. If $f\(0\)<\left[\ex\alpha-c\right]^{-1}$ and $\h\(x\)$ is (i) monotone or (ii) B-shaped, then this Nash equilibrium is unique. 
\end{theorem}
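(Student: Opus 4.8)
The plan is to collapse the $n$-player best-reply problem to a single equation in the aggregate output, read off existence and the absence of asymmetric equilibria, and then derive uniqueness from a log-concavity property of the expected per-unit profit. Put $\psi(x):=\ex\(\alpha-x\)_+-c=\int_x^H\F\(u\)\du-c$; then $\psi\in C^1$ with $\psi'=-\F$, is convex and strictly decreasing, and — using $\ex\alpha>c$ — has a unique zero $\bar x\in(0,H]$. With $x=\sum_j x_j$ the total output, firm $i$'s payoff is $\pi_i=x_i\,\psi(x)$, so its first-order condition reads $\psi(x)=x_i\F(x)$, i.e. $x_i=\psi(x)/\F(x)=:\m_c(x)=\m(x)-c/\F(x)$, a quantity that depends on $x_i$ only through the total $x$. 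Hence in any equilibrium all firms with positive output produce the common value $\m_c(x)$; and since any equilibrium must have $x<\bar x$ (otherwise $\psi(x)\le0$ and every firm strictly gains by reducing its output), a firm producing $0$ has strictly positive marginal payoff at $x_i=0$ and would deviate. This proves the no-asymmetric-equilibria claim and reduces equilibria to the zeros $x=nx^*$ of $\Delta(x):=n\,\m_c(x)-x$ on $(0,\bar x)$. Since $\Delta(0^+)=n(\ex\alpha-c)>0$ and $\Delta(\bar x^-)=-\bar x<0$, the intermediate value theorem produces such a zero, hence a symmetric equilibrium — once one checks that a zero of $\Delta$ is a genuine (global, not merely critical) best reply, which follows automatically once $\m_c$ is shown to be non-increasing, because then $x_i\mapsto x_i\psi(x_i+(n-1)x^*)$ is single-peaked with peak at $x^*$.

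Thus both the uniqueness of the symmetric equilibrium and that last point reduce to showing that $\m_c$ is non-increasing on $(0,\bar x)$; equivalently, since $(\ln\psi)'=\psi'/\psi=-1/\m_c$, that $\psi$ is \emph{log-concave} there. A one-line computation gives $(\ln\psi)''\le0\iff f(x)\psi(x)\le\F(x)^2\iff\Theta(x):=\h(x)\,\m_c(x)\le1$, and $\Theta(0)=f(0)\(\ex\alpha-c\)$, so the hypothesis $f(0)<\left[\ex\alpha-c\right]^{-1}$ is exactly $\Theta(0)<1$. So the theorem reduces to: if $\h$ is bathtub-shaped — the monotone cases being the extremes, with trough at $0$ (IFR) or at $H$ (DFR) — and $\Theta(0)<1$, then $\Theta<1$ throughout $(0,\bar x)$.

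I would split at the trough $x_*$ of $\h$. On the increasing branch $[x_*,\bar x)$ the bound needs no extra hypothesis: from $\m_c(x)=\F(x)^{-1}\int_x^{\bar x}\F(u)\du$ and $\F(u)/\F(x)=\exp\(-\int_x^u\h(s)\,\d s\)\le e^{-\h(x)(u-x)}$ for $u\ge x\ge x_*$ (where $\h(s)\ge\h(x)$), one gets $\m_c(x)\le\bigl(1-e^{-\h(x)(\bar x-x)}\bigr)/\h(x)$, hence $\Theta(x)<1$. On the decreasing branch $[0,x_*]$ this estimate runs the wrong way and the condition on $f(0)$ is essential; here I would use a barrier (comparison) argument: set $N:=\m_c-1/\h$, so that $\Theta\le1\iff N\le0$; the identity $\m_c'=\h\m_c-1$ together with $\h$ non-increasing gives $N'\le\h N$, whence $(N\F)'=\F(N'-\h N)\le0$, and since $N(0)\F(0)=N(0)=(\ex\alpha-c)-f(0)^{-1}<0$ by hypothesis we get $N\F\le N(0)<0$, i.e. $\Theta<1$, on $[0,x_*]$. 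Combining the two branches gives $\Theta<1$ on $(0,\bar x)$, so $\m_c$ is strictly decreasing, $\Delta$ is strictly decreasing, the symmetric equilibrium is unique, and with the first paragraph the Nash equilibrium is unique.

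The step I expect to be the main obstacle is the decreasing-branch barrier argument: converting the mere monotonicity of $\h$ into the quantitative inequality $\Theta\le1$ genuinely needs the $f(0)$ restriction — without it the conclusion can fail, since a decreasing-hazard distribution other than the exponential need not even be DMRD and then $\psi$ is not log-concave — and it requires some care, as $\h$ is only defined a.e. and may be discontinuous (its jumps being downward, they only help the inequality) or vanish on a subinterval (where $\Theta\equiv0$ trivially). The remaining loose ends are routine: the global-best-reply verification from monotonicity of $\m_c$, and the behaviour at the upper endpoint when $\bar x=H$ or $H=+\infty$, where one checks that $\Theta\le1$ survives in the limit and that $\Delta$ actually changes sign (immediate when $c>0$, since then $\bar x<H$).
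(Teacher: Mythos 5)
This theorem is not proved in the paper at all --- it is quoted from \cite{Lag06} as background --- so your attempt can only be set against Lagerl\"of's original argument and the paper's surrounding machinery. Your reduction is exactly the paper's: your $\Delta(x)=n\,\m_c(x)-x$ is $n\L(x)$, and your $\Theta=\h\,\m_c\le 1$ is precisely log-concavity of the expected per-unit profit, the quantity the paper itself works with in \Cref{thm:sufficient,thm:filler}. The uniqueness part of your proof is, as far as I can check, correct and takes a more quantitative route than Lagerl\"of's: on the non-decreasing branch of $\h$ the comparison $\F(u)/\F(x)=\exp(-\int_x^u\h(s)\,\d s)\le e^{-\h(x)(u-x)}$ does give $\h(x)\m_c(x)<1$ with no restriction on $f(0)$ (consistent with IFR $\Rightarrow$ DMRD and the paper's own results), and on the non-increasing branch your barrier goes through; in fact the step you single out as delicate is cleaner than you fear, since $N\F=\psi-\F^2/f$ satisfies the exact identity $(N\F)'=\F\,\h'/\h^2\le 0$ (in BV terms $\d(N\F)=-\F\,\d(1/\h)\le 0$), so downward jumps or singular parts of $\h$ only help, and the hypothesis $f(0)<[\ex\alpha-c]^{-1}$ is exactly $N(0)\F(0)<0$. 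With $\Theta\le 1$ one even gets $\Delta'\le -1$ directly, so the unique crossing and the single-peakedness of $x_i\mapsto x_i\psi(x_i+(n-1)x^*)$ follow as you say. A nice by-product of your split is that it isolates the $f(0)$ condition as needed only on the decreasing branch.

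The one genuine gap concerns the unconditional clauses of the statement. Existence of a symmetric equilibrium is asserted \emph{without} the hazard-rate and $f(0)$ hypotheses, but in your sketch the verification that a root of $\Delta$ is a genuine (global) best reply rests on $\m_c$ being non-increasing, which you only obtain under those hypotheses. Without them the objective $x_i\psi(x_i+(n-1)x^*)$ need not be quasi-concave, a first-order root need not be an equilibrium, and the intermediate value theorem alone does not deliver existence; this is precisely why the paper backs the existence claim by the symmetric-Cournot existence results cited as \cite{Am00} (McManus/Amir--Lambson-type arguments) rather than by the sign change of $\L$. Your no-asymmetric-equilibria argument, by contrast, is unconditional and fine, up to the small fix that when $\psi(x)\le 0$ it is the firms with \emph{positive} output that profitably deviate (and the all-zero profile is excluded by $\ex\alpha>c$). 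The endpoint case $c=0$, $H=+\infty$ that you flag is a real further caveat for the sign change of $\Delta$, but it lies outside what your IVT step can give in any event.
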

The statement of \Cref{thm:lagerlof} in \cite{Lag06} does not use the B-shaped terminology, but rather states that the slope of $\h\(x\)$ needs to change sign exactly once starting from negative and turning to positive. It is immediate to see that B-shaped distributions satisfy this property. To proceed with the derivation of new conditions, we will use an equivalent formulation of the first order condition in \eqref{eq:condition}. For this, we define 
\[\L\(x\):=\m\(x\)-c\F\(x\)^{-1}-xn^{-1}, \qquad \text{ for } x\ge 0\] and observe that any symmetric pure Nash equilibrium $x^*>0$ must satisfy $\L\(x^*\)=0$. Since $\L\(0\)=\ex\alpha-c>0$ by assumption and $\lim_{x\to H^-}\L\(x\)=-\infty$, we have that $\L\(x\)$ starts positive and ends negative. This establishes the -- already well known -- existence of symmetric pure Nash equilibria in this model \citep{Am00}. Concerning uniqueness, we have the following sufficient conditions.

\begin{theorem}\label{thm:sufficient}
If $\alpha$ is (i) DMRD or (ii) IGFR, then the symmetric pure Nash equilibrium is unique.
\end{theorem}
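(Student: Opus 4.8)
The plan is to reduce the assertion to showing that $\L$ has a unique zero in $(0,H)$: by \Cref{thm:lagerlof} the game has no asymmetric equilibria and at least one symmetric one, and the symmetric equilibria are in one-to-one correspondence (through the total output) with the zeros of $\L$ in $(0,H)$, as recorded above. It is convenient to pass to the auxiliary function
\[\phi(x):=\F(x)\,\L(x)=\int_x^H\F(u)\du-c-\frac{x\,\F(x)}{n},\qquad x\in[0,H),\]
which has the same zeros as $\L$ on $(0,H)$ because $\F>0$ there, which satisfies $\phi(0)=\ex\alpha-c>0$ and $\phi(H^-)\le 0$, and which, being absolutely continuous, has almost everywhere
\[\phi'(x)=\frac{1}{n}\big(x\,f(x)-(n+1)\,\F(x)\big)=\frac{\F(x)}{n}\big(\g(x)-(n+1)\big),\]
so that on $(0,H)$ the sign of $\phi'$ coincides with the sign of $\g(x)-(n+1)$.

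For part (i) I would argue directly on $\L$ rather than on $\phi$: under DMRD, $\m$ is non-increasing; $x\mapsto-c\,\F(x)^{-1}$ is non-increasing because $\F$ is non-increasing and $c\ge 0$; and $x\mapsto -x/n$ is strictly decreasing. Hence $\L=\m-c\,\F^{-1}-x/n$ is strictly decreasing on $(0,H)$. Since $\L(0)=\ex\alpha-c>0$ while $\L$ is eventually negative — because $c\,\F(x)^{-1}\to+\infty$ as $x\to H^-$ when $c>0$, and because $\m(x)\le\m(0)=\ex\alpha$ forces $\L(x)\le\ex\alpha-x/n$ when $c=0$ — the equation $\L(x)=0$ has exactly one root. (This part uses neither differentiability nor any hypothesis on a density.)

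For part (ii), under IGFR the map $\g$ is non-decreasing, so $\g(x)-(n+1)$ changes sign at most once, from negative to positive; moreover $\g(0^+)=0<n+1$ because $\F(0)=1<\infty$ (were $\g(0^+)>0$, then $\h(x)\ge\g(0^+)/x$ near $0$ would force $\F(0)=+\infty$). Consequently there is a threshold $x_0\in(0,H]$ such that $\phi$ is strictly decreasing on $(0,x_0)$ and non-decreasing on $(x_0,H)$ — the unimodal shape exploited in \cite{Le18}. Combining this with $\phi(0)>0$ and $\phi(H^-)\le 0$ gives uniqueness: on $[x_0,H)$ one has $\phi(x)\le\phi(H^-)\le 0$, so every zero of $\phi$ lies in $(0,x_0)$, where strict monotonicity permits at most one; since at least one zero exists, the symmetric equilibrium is unique.

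I expect the only genuine obstacle to be the last step of (ii) in the knife-edge case $c=0$: a quasi-convex $\phi$ could in principle vanish identically on a sub-interval on which $\g\equiv n+1$ (a Pareto piece), yielding a continuum of equilibria. When $c>0$ this is impossible, since it would force $\phi(H^-)=0$ against $\phi(H^-)=-c<0$; when $c=0$ it is ruled out precisely by the finite $(n+1)$-st moment hypothesis invoked in \Cref{cor:dgmrl}, equivalently by requiring $\g$ to be strictly increasing. Everything else — the derivative identity for $\phi$ and the whole of part (i) — is routine.
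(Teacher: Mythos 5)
Your part (i) is the paper's argument: under DMRD, $\L$ is strictly decreasing (non-increasing $\m$, non-increasing $-c\F^{-1}$, strictly decreasing $-x/n$), starts positive and ends negative, hence has exactly one root. Your part (ii) is a repackaging rather than a different idea: your identity $\phi'(x)=\frac{1}{n}\F(x)\bigl(\g(x)-(n+1)\bigr)$ for $\phi=\F\,\L$ carries the same information as the paper's $\L'(x)=\h(x)\L(x)+\frac1n\bigl(\g(x)-(n+1)\bigr)$ evaluated at a zero of $\L$. The paper then argues locally at each root, showing $\L(x)/x=\e(x)-c\bigl(x\F(x)\bigr)^{-1}-n^{-1}$ is decreasing wherever $\g>1$ (via IGFR $\Rightarrow$ DGMRD, \Cref{lem:technical}), so that $\L'(x^*)<0$; you argue globally from the decreasing-then-increasing shape of $\phi$. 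Both versions give uniqueness cleanly when $c>0$, and yours is somewhat more transparent about exactly where strictness enters.

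The knife-edge you flag at $c=0$ is genuine, and it is worth stressing that the paper's own proof does not close it either: there, the strict inequality $\bigl(\L(x)/x\bigr)'\big|_{x=x^*}<0$ at a root with $\g(x^*)>1$ comes from the term $-c\bigl(x\F(x)\bigr)^{-1}$ being strictly decreasing, while DGMRD only makes $\e$ non-increasing; at $c=0$ that source of strictness disappears. Concretely, take $c=0$ and an intercept whose generalized failure rate rises to $n+1$ and then stays there, e.g.\ exponential with rate $(n+1)/x_1$ on $[0,x_1]$ glued continuously to a Pareto tail of index $n+1$: this is absolutely continuous, has finite mean, and is IGFR, yet $\e\equiv 1/n$ and hence $\L\equiv 0$ on $[x_1,\infty)$, a continuum of solutions of the first-order condition (and, for $n\ge2$, a direct check of the profit function shows these are all bona fide symmetric equilibria). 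So your closing suspicion is correct: part (ii) as stated needs either $c>0$ or an additional requirement such as strict increase of $\g$, or the finite $(n+1)$-th moment condition (equivalently $\lim_{x\to\infty}\e(x)<1/n$ by \Cref{lem:technical}-(i)) that \Cref{cor:dgmrl} imposes. In short, your proof is as complete as the paper's; the residual gap you identified is a defect of the statement in the boundary case $c=0$, not of your argument alone.
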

\begin{proof} Part (i) is obvious, since in this case $\L\(x\)$ is decreasing for $x>0$. To prove part (ii), it suffices to show that $\L'\(x^*\)<0$ for every equilibrium $x^*$. By continuity of $\L\(x\)$, this implies that $\L\(x\)$ crosses the $x$-axis at most once and hence, it establishes the claim. Taking the derivative of $\L\(x\)$, we obtain that $\L'\(x\)=\h\(x\)\L\(x\)+\frac1n\(\g\(x\)-\(n+1\)\)$. Since $\L\(x^*\)=0$ for every equilibrium, the sign of $\L'\(x^*\)$ is determined by the sign of the term $g\(x\)-\(n+1\)$. To proceed, we consider $\L\(x\)/x$ which is equal to 
\[\L\(x\)/x=\e\(x\)-c\(x\F\(x\)\)^{-1}-n^{-1}\]
Since, by assumption, $\alpha$ is IGFR, \Cref{lem:technical}-(ii) implies that $\e\(x\)$ is decreasing for all $x\ge0$, since $\alpha$ is DGMRD in this case. Moreover, $\(x\F\(x\)\)'=\F\(x\)\(1-\g\(x\)\)$ which implies that $-\(x\F\(x\)\)^{-1}$ is decreasing for all $x\ge0$ such that $\g\(x\)>1$. Hence, as a sum of decreasing functions, $\L\(x\)/x$ is also decreasing for all $x\ge 0$ such that $\g\(x\)>1$. Along with $\L\(x^*\)=0$ for any equlibrium $x^*>0$, this implies that
\[0>\(\L\(x\)/x\)'\Big|_{x=x^*}=\L'\(x^*\)/x^*\]
and hence, that $\L'\(x^*\)<0$ for any $x^*>0$ such that $\g\(x^*\)>1$. If $x^*$ is such that $\g\(x^*\)\le1$, then trivially $\L\(x^*\)=\frac1n\(\g\(x\)-\(n+1\)\)<0$ which concludes the proof of case (ii).
\end{proof}
Neither the DMRD nor the IGFR conditions make use of the the assumption $f\(0\)<\(\ex \alpha-c\)^{-1}$ which is necessary for \Cref{thm:lagerlof} to hold. However, the DMRD, IGFR, monotone decreasing or B-shaped hazard rate conditions are not comparable, since none implies the other \citep{Gu95}, and hence \Cref{thm:sufficient} should be interpreted as complementing rather than substituting \Cref{thm:lagerlof}. In the special case that $c=0$, uniqueness is established under the following condition.
\begin{corollary}\label{cor:dgmrl}
If $c=0$ and $\alpha$ is DGMRD with finite $\(n+1\)$-th moment, then the Nash equilibrium is unique.
\end{corollary}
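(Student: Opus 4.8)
The plan is to specialize the argument behind \Cref{thm:sufficient}-(ii) to the case $c=0$, where the marginal-cost term in $\L$ disappears. Setting $c=0$ gives $\L(x)=\m(x)-x/n$, so that for every $x>0$
\[
\frac{\L(x)}{x}=\e(x)-\frac1n ,
\]
and a positive $x^*$ is a symmetric equilibrium candidate precisely when $\e(x^*)=1/n$. First I would record where the moment hypothesis enters: by \Cref{lem:technical}-(i) applied with $\beta=n$, finiteness of $\ex\alpha^{\,n+1}$ is equivalent to $\gamma:=\lim_{x\to H^-}\e(x)<1/n$ (when $H<+\infty$ this is automatic, since $\m(x)\le H-x\to0$ forces $\gamma=0$). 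Together with $\L(0)=\ex\alpha>0$ and $\L(x)=x(\e(x)-1/n)$, this re-establishes existence: $\L$ starts positive and ends negative.

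For uniqueness I would show that $\L$ has a single zero on $(0,H)$. Since $\alpha$ is DGMRD, $\e$ is non-increasing, hence so is $x\mapsto\L(x)/x=\e(x)-1/n$; it is continuous, equals $+\infty$ near $0$ (because $\m(0)=\ex\alpha>0$), and is strictly negative near $H$ by the previous step. A continuous non-increasing function changes sign at most once, which pins down the equilibrium. If one prefers the bookkeeping of \Cref{thm:sufficient}, the same follows from $\L'(x^*)<0$ at every equilibrium: using $\L'(x)=\h(x)\L(x)+\tfrac1n(\g(x)-(n+1))$ together with $\L(x^*)=0$, this reduces to $\g(x^*)<n+1$, and the weak inequality $\g(x^*)\le n+1$ comes from the DGMRD bound $\m'(x^*)\le\e(x^*)=1/n$ combined with the elementary identity $\e(x)\g(x)=\m(x)\h(x)=\m'(x)+1$.

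The step that needs care — and the one I expect to be the real obstacle — is excluding that $\e$ is constant equal to $1/n$ on a nondegenerate interval, which is exactly the situation in which $\L$ vanishes on a whole interval (the weak inequality already gives $\L'(x^*)\le0$, and a lone zero of $\L$ with $\g(x^*)=n+1$ is, by the monotonicity of $\L/x$, still a single down-crossing). The moment hypothesis handles one case: if $\e\equiv1/n$ on $[x^*,H)$ then $\gamma=1/n$, contradicting $\gamma<1/n$. For a bounded plateau $[x_1,x_2]$ one has $\m(x)=x/n$, hence $\h(x)=(n+1)/x$ and $\F(x)=\F(x_1)(x_1/x)^{\,n+1}$, on that interval; I would then either rule this configuration out using the global monotonicity of $\e$ together with $\e\to\gamma<1/n$, or, failing that, return to the primitive best-response condition coming from \eqref{eq:payoff} — on such a segment the per-firm revenue is locally flat, so one must verify that at most one of its points is a symmetric best response. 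This passage, from uniqueness of the zero of $\L$ to uniqueness of the symmetric equilibrium, is the only place where the first-order condition by itself is not decisive.
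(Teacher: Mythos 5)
Your argument is essentially the paper's own proof: with $c=0$ the paper simply notes that $\L(x)/x=\e(x)-n^{-1}$, so DGMRD makes this non-increasing while \Cref{lem:technical}-(i) with $\beta=n$ gives $\lim_{x\to H}\e(x)<1/n$, forcing a single down-crossing of zero. The ``plateau'' scenario ($\e\equiv 1/n$ on a bounded interval) that you flag as the remaining obstacle is not treated in the paper at all -- its one-line proof implicitly reads the monotone crossing as unique and only the unbounded plateau is excluded by the moment condition, exactly as in your argument -- so your proposal matches the paper's route and, if anything, is more candid about that degenerate case.
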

\begin{proof} In this case $\L\(x\)/x=\e\(x\)-n^{-1}$ and the conclusion follows from \Cref{lem:technical}-(i).
\end{proof}
\begin{remark}\label{rem:conditions}The assumption that $\alpha$ has a density $f$ is not necessary for the DMRD or DGMRD conditions, since they are defined in terms of the cdf $F$. They still apply if $F$ is merely continuous. In this case, we need to rewrite \Cref{eq:condition} in terms of $F$. Additionally, the assumption that $\alpha$ is supported on an interval is also not necessary for DMRD and DGMRD conditions \citep{Le18}. However, for the IGFR to hold, it is necessary that $\alpha$ is supported on an interval \citep{La06}. Since these assumptions increase the complexity without truly generalizing the results in economic terms, we restricted our presentation to distributions with a density. Finally, although DGMRD distributions are a proper superset of IGFR distributions as stated in \Cref{lem:technical}-(ii), \Cref{cor:dgmrl} cannot be viewed as a direct generalization of \Cref{thm:sufficient} due to the moment restriction.
\end{remark}

The condition $f\(0\)<\(\ex \alpha-c\)^{-1}$ which is necessary for \Cref{thm:lagerlof} to hold, guarantees that $\(P\(x\)-c\)$ is log-concave for values of $x$ close to $0$. \cite{Lag06} conjectures that in view of \cite{Am00}, Theorem 2.7, this assumption may not be necessary\footnote{This assumption is rather weak as it is satisfied by any distribution with $f\(0\)=0$ in particular.}. \Cref{thm:filler} shows that this is indeed the case for distributions with the DMRD property which implies the required log-concavity of the expected per unit profit. The DMRD condition does not assume the existence of a density and hence applies to a broader set of distributions. 

\begin{theorem}\label{thm:filler}
If $\alpha\sim F$ is DMRD and $F$ is continuous, then $P\(x\)-c$ is log-concave. 
\end{theorem}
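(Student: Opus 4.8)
The plan is to reduce the statement to a one-line monotonicity check by exploiting the identity $P(x) = \m(x)\F(x)$. Here $P(x) := \ex(\alpha - x)_+ = \int_x^H \F(u)\du$ denotes the expected price, so that $P(x) - c$ is the expected per-unit profit. Two preliminary facts are needed, both elementary. First, since $F$ is continuous, $\F$ is continuous, so $P$ is continuously differentiable with $P'(x) = -\F(x)$; this is the only use of the continuity of $F$, and it is precisely why no density --- and \emph{a fortiori} no restriction on the value of a density at $0$ --- is required. Second, comparing with the definition \eqref{mrl} of the mean residual demand, $P(x) = \m(x)\F(x)$ for every $x$ with $\F(x) > 0$.

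I would then localize to the region where the claim has content. Since $P$ is continuous and strictly decreasing on $[0,H)$ with $P(0) = \ex\alpha > c$ and $P(x) \to 0$ as $x \to H^-$, there is a unique $\bar x \in (0, H)$ with $P(\bar x) = c$ (set $\bar x := H$ when $c = 0$), and the positivity set of $P - c$ is exactly the interval $[0, \bar x)$, on which $\F$ is strictly positive. Log-concavity of $P - c$ is a statement about this interval only; on its complement $P - c \le 0$ and the conclusion is vacuous under the usual convention.

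On $(0, \bar x)$ the function $x \mapsto \log(P(x) - c)$ is continuously differentiable, with derivative $P'(x)/(P(x)-c) = -\F(x)/(P(x)-c)$. A $C^1$ function on an interval is concave if and only if its derivative is non-increasing, so it suffices to prove that $\F(x)/(P(x)-c)$ is non-decreasing on $(0,\bar x)$; as numerator and denominator are strictly positive there, this is equivalent to showing that the reciprocal
\[
\frac{P(x) - c}{\F(x)} \;=\; \frac{\m(x)\F(x) - c}{\F(x)} \;=\; \m(x) - \frac{c}{\F(x)}
\]
is non-increasing on $(0,\bar x)$. This is now immediate: $\m$ is non-increasing because $\alpha$ is DMRD, while $x \mapsto -c/\F(x)$ is non-increasing because $c \ge 0$ and $\F$ is positive and non-increasing; a sum of non-increasing functions is non-increasing. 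Continuity then extends concavity to the endpoint $x = 0$. (The same computation shows that $\log P$ has derivative $-1/\m(x)$, so in the case $c = 0$ the function $P$ itself is log-concave under DMRD.)

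There is no genuine obstacle here, only bookkeeping: one must verify that $P$ is really $C^1$ (this is where continuity of $F$ enters) and that the positivity set of $P - c$ is an interval $[0, \bar x)$ on which $\F$ stays strictly positive, so that dividing by $\F(x)$ and by $P(x) - c$ and passing to reciprocals are all legitimate. I would record these as two short observations preceding the main display.
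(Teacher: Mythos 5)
Your proof is correct and follows essentially the same route as the paper's: differentiate to get $P'(x)=-\F(x)$ using only continuity of $F$, write $P(x)=\m(x)\F(x)$, and reduce log-concavity of $P(x)-c$ to the monotonicity of $\m(x)-c\,\F(x)^{-1}$, which follows from DMRD plus the fact that $-c/\F(x)$ is non-increasing. The extra bookkeeping you add (restricting to the positivity set of $P-c$ and checking signs before passing to reciprocals) is a careful, but not conceptually different, refinement of the paper's argument.
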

\begin{proof}
$P\(x\)-c$ is log-concave if and only if $\(P\(x\)-c\)'/\(P\(x\)-c\)$ is decreasing \citep{Ba05}. To calculate the derivative of $P\(x\)$, we rewrite $P\(x\)$ as 
\[P\(x\)=\int_{x}^H\(\alpha-x\)f\(\alpha\)d\alpha=\ex\(\alpha-x\)_+=\m\(x\)\F\(x\)\]
Since $F$ is continuous with finite expectation, we have that 
\[\frac{\d}{\d x}\ex\(\alpha-x\)_+=\frac{\d}{\d x}\(\ex\alpha-\int_{0}^{x}\F\(u\)\du\)=-\F\(x\)\]
Hence, 
\[\frac{\(P\(x\)-c\)'}{\(P\(x\)-c\)}=\frac{-b\F\(x\)}{\m\(x\)\F\(x\)-c}=-\cdot\frac{1}{\m\(x\)-\frac{c}{\F\(x\)}}\]
which implies that $\(P\(x\)-c\)'/\(P\(x\)-c\)$ is decreasing if $\m\(x\)-c\F\(x\)^{-1}$ is decreasing. Since $-c\F\(x\)^{-1}$ is decreasing, the claim follows.
\end{proof}
\section*{Acknowledgements}
The authors thank an anonymous reviewer for the quality of their reports and their invaluable feedback, comments and corrections which considerably improved the final version of the paper. Stefanos Leonardos gratefully acknowledges support by a scholarship of the Alexander S. Onassis Public Benefit Foundation.

\bibliographystyle{DeGruyter}
\bibliography{lagerlof_bib}

\end{document}